\DeclareMathAlphabet{\pazocal}{OMS}{zplm}{m}{n}
\newtheorem{theorem}{Theorem}[section]
\newtheorem{lemma}[theorem]{Lemma}
\newtheorem{proposition}[theorem]{Proposition}
\numberwithin{equation}{section}
\newcommand{\RR}{{\mathbb R}}
\newcommand{\ZZ}{{\mathbb Z}}
\newcommand{\Cc}{{\mathcal{C}}}
\newcommand{\Dc}{{\mathcal{D}}}
\newcommand{\Hc}{{\mathcal{H}}}
\newcommand{\Oc}{{\mathcal{O}}}
\newcommand{\balpha}{{\bm \alpha}}
\newcommand{\bgamma}{{\bm \gamma}}
\newcommand{\bix}{{\bm x}}
\newcommand{\fP}{{\mathfrak P}}
\newcommand{\Times}{\mbox{\Large $\times$}}      
\newcommand{\supp}{{\mathrm{supp}}}
\def\ie{{\it i.e.\ }}
\begin{document} 

%

\title{\LARGE Linking numbers in local quantum field theory}
\author{Detlev Buchholz${}^a$,
Fabio Ciolli${}^b$, Giuseppe Ruzzi${}^b$ and   
Ezio Vasselli${}^b$ \\[20pt]
\small  
${}^a$ Mathematisches Institut, Universit\"at G\"ottingen, \\
\small Bunsenstr.\ 3-5, 37073 G\"ottingen, Germany\\[5pt]
\small
${}^b$ 
Dipartimento di Matematica, Universit\'a di Roma ``Tor Vergata'' \\
\small Via della Ricerca Scientifica 1, 00133 Roma, Italy \\
}
\date{}

\maketitle

{\small 
  \noindent {\bf Abstract.}
    Linking numbers appear in local quantum field theory in the 
    presence of tensor fields, which are closed two-forms on Minkowski space.
    Given any pair of such fields, it is shown that the commutator of 
    the corresponding intrinsic (gauge invariant) vector potentials, 
    integrated about spacelike separated, spatial loops, 
    are elements of the center of the algebra of all local fields. 
    Moreover, these 
    commutators are proportional to the linking numbers of the underlying 
    loops. If the commutators 
    are different from zero, the underlying two-forms are not exact 
    (there do not exist local vector potentials for them). 
    The theory then necessarily contains massless particles.
    A prominent example of this kind, due to J.E.~Roberts, is given 
    by the free electromagnetic field and its Hodge dual. Further 
    examples with more complex mass spectrum
    are presented in this article. 
  } \\[2mm]
{\bf Mathematics Subject Classification.}  \  81T05, 83C47, 57T15
 \\[2mm]
{\bf Keywords.} \ intrinsic vector potential, linking
numbers, massless particles 

\section{Introduction}

Skew symmetric tensor fields, which are closed 
two-forms on Minkowski space, are 
familiar from local quantum field theory, the most 
prominent example being the electromagnetic field satisfying the 
homogeneous Maxwell equation. One frequently argues that such  
two-forms are exact and introduces 
corresponding local vector potentials, which are in general 
defined on indefinite metric spaces. It was pointed out 
by J.E.~Roberts \cite{Roberts} that this step might not always be   
possible, however, if one is dealing with several  
closed two-forms, such as the free electromagnetic field and its 
Hodge dual.  

\medskip
Obstructions to the existence of local vector potentials manifest 
themselves in commutators of the given tensor fields, which are integrated 
over \mbox{compact} two-surfaces with spacelike separated boundaries. 
These integrals can be rewritten in terms of intrinsically defined 
(gauge invariant) vector potentials, 
cf.\ \cite{BuCiRuVa1}, which are integrated along the boundaries.
If these commutators are different from zero, the intrinsic 
vector potentials, being defined on loop functions, cannot be 
extended to vector potentials which are point-like and local 
with respect to each other. Phrased differently, the underlying closed 
two-forms can not be exact in the setting of local quantum field
theory. Let us explain this point in somewhat more detail. 

\medskip 
Let $F_{\mu \nu}(f^{\mu \nu})$ be a local, hermitean, skew symmetric 
rank-two tensor field in the framework
of quantum field theory. It is assumed to be  
linear with regard to the tensor-valued test functions $f^{\mu \nu}$ and 
closed as a two-form on Minkowski space, \ie 
$F_{\mu \nu}(\partial_\rho f^{\mu \nu \rho}) = 0$, 
where $f^{\mu \nu \rho}$ are test functions with values in skew symmetric
rank-three tensors and $\partial_\rho$ are the
spacetime derivatives. There then exists
a corresponding intrinsic (gauge invariant) vector potential $A_\mu(h^\mu)$.
It is defined for all co-closed vector-valued 
test functions $h^\mu$, satisfying 
$\partial_\mu h^\mu = 0$, which form a 
vector space denoted by $\Cc_1(\RR^4)$. 
Given $h \in \Cc_1(\RR^4)$, the relation
between the tensor field and the intrinsic vector potential 
is established by the formula
$F_{\mu \nu}(f^{\mu \nu}) = A_\mu(h^\mu)$, where 
$f^{\mu \nu}$ is any test function satisfying
\mbox{$\partial_\nu f^{\mu \nu} = h^\mu$}.
Such functions exist by the Poincar\'e lemma and $F_{\mu \nu}(f^{\mu \nu})$
depends only on $h^\mu$ in view of the fact that 
$F_{\mu \nu}$ is a closed two-form, cf.\ \cite{BuCiRuVa1}.

\medskip
Einstein causality is expressed by the condition of locality 
according to which the commutator of local fields vanishes
at spacelike distances. The intrinsic vector potentials inherit
from the underlying local tensor fields certain locality properties:
if $h \in \Cc_1(\RR^4)$ is any test function which has support in some
bounded, contractible region $\Oc$, then $A_\mu(h^\mu)$ 
commutes with all local operators which are localized in similar 
regions in the spacelike complement of $\Oc$. Yet if the 
support properties of $h$ are topologically non-trivial,
this commutativity may fail. Consider for example  
two local, hermitean, skew symmetric and closed 
rank-two tensor fields $F_{\mu \nu}$, $G_{\rho \sigma}$
with corresponding intrinsic vector potentials $A_\mu$ and
$B_\rho$, respectively. As was shown in 
\cite{Roberts,BuCiRuVa2} 
by examples, the commutator 
\begin{equation}
\label{eq.comm}
[A_\mu(h^\mu), B_\rho(k^\rho)]
\end{equation}
need not vanish for test functions $h, k \in \Cc_1(\RR^4)$
having support in linked, spacelike separated loop-shaped 
regions. The intrinsic vector potentials can then
not be extended to local point fields 
which are defined on the space of 
all vector-valued test functions $\Dc_1(\RR^4) \supset \Cc_1(\RR^4)$. 

\medskip
In the present article we study the properties 
of the commutators $[A_\mu(h^\mu), B_\rho(k^\rho)]$ 
of intrinsic vector potentials in detail and refer to them as
\textit{causal commutators} if the supports of the
underlying test functions are spacelike \mbox{separated}. 
Our analysis is organized as follows. 
Denoting by $\fP$ the \mbox{polynomial} *-algebra generated by 
all local fields in the theory and making use of results in 
\cite{BuCiRuVa1}, we will show in a first step that the commutators are
central (superselected) elements of $\fP$ whenever the supports of 
$g$ and $h$ are spacelike separated. For certain specific 
test functions, having supports in linked, spacelike separated 
loop-shaped regions (loop functions), the causal commutators are 
then shown to be stable under deformations of the loops.
Thus these commutators encode topological information
and are proportional to linking numbers associated
with the underlying loops. Similar results were stated in 
\cite{Roberts} without proofs; these are supplied here. 
Our main result consists 
of the demonstration that the causal commutators can be different from zero 
only in theories involving massless particles. 
This resembles Swieca's theorem on the existence 
of massless particles in theories with an electric charge
\cite{Sw}, but it is not directly related to it. 
Examples of theories with non-trivial causal commutators 
are then exhibited and the article closes with some 
expository remarks. 

\section{Causal commutators are elements of the center}
\label{sec2}

In this section we analyze algebraic properties of the   
causal commutators 
\eqref{eq.comm}. To this end, we recall some standard
notation for skew symmetric 
tensor functions in order to express  
properties of their quantum counterparts in precise mathematical terms. 

\medskip
In the sequel, we shall adopt the short hand notation $t \in \Dc_n(\RR^4)$
for skew symmetric $n$-tensor valued test functions
$t \doteq (t^{\mu_1 \ldots \mu_n})$, $n = 0, \dots, 4$. Considering their 
supports, we write $\supp(t) \perp \supp(t')$ to denote 
the spacelike separation of the supports 
of pairs $t,t' \in \Dc_n(\RR^4)$ in Minkowski spacetime. 
As already mentioned in the introduction, a stronger condition 
of spacelike separation is obtained by requiring that the 
supports of $t,t'$  are contained in 
spacelike separated, bounded, contractible regions,
which then are separated by two characteristic planes; 
in this case we write
$\supp(t) \, \Times \, \supp(t')$.
The difference between $\perp$ and $\Times$ becomes manifest when 
the supports of $t,t'$ have non-trivial topological properties.

\medskip
On the tensor valued test functions acts 
the co-derivative $\delta$, whose action on  
\mbox{$f \in \Dc_2(\RR^4)$} is given by  
$(\delta f)^\mu \doteq -2\partial_\nu f^{\nu \mu} \in \Cc_1(\RR^4)$,
where $\Cc_1(\RR^4)$ consists of all co-closed
functions in \mbox{$\Dc_1(\RR^4)$}, satisfying   
$\delta h \doteq - \partial_\nu h^\nu = 0$.
According to the Poincar\'e Lemma one has 
$\Cc_1(\RR^4) = \delta(\Dc_2(\RR^4))$, 
cf.\ \cite{BuCiRuVa1}.

\medskip 
Passing to the quantum level, we 
consider hermitean operators 
$F(f)$, $G(g)$,
linear in $f,g \in \Dc_2(\RR^4)$, 
which fulfill the homogeneous Maxwell equation
\begin{equation}
\label{eq.maxw}
dF(t) \, \doteq \, F_{\mu\nu}(\partial_\rho t^{\rho\mu\nu}) \, = \, 0 \, ,
\quad dG(t) \, = \, 0 \, , \quad t \in \Dc_3(\RR^4) \, ,
\end{equation}
and the causality relations
\begin{equation}
\label{eq.caus}
[ F(f) \, , \, G(g) ] \, = \, 0 \, , \quad 
\supp(f) \perp \supp(g) \, , \quad f,g \in \Dc_2(\RR^4) \, .
\end{equation}
The homogeneous Maxwell equation implies that for any
$h, k \in \Cc_1(\RR^4)$ the operators
$$
A(h) \, \doteq   \, F(f) \, , \quad B(k) \, \doteq \, G(g)
$$
are well-defined for any choice of 
test functions $f,g \in \Dc_2(\RR^4)$, satisfying  
$\delta f = h$ and $\delta g = k$, c.f.\ \cite{BuCiRuVa1}. 
We call $A$ and $B$ the \emph{intrinsic vector potentials} 
defined by $F$ and $G$, respectively.

\medskip
We assume that the tensor fields $F, G$ are elements of some 
polynomial *-algebra $\fP$, which is generated by local field operators
and on which the Poincar\'e group acts covariantly by automorphisms. 
Based on this input, we want to clarify the properties of commutators of 
the corresponding intrinsic vector potentials, cf \eqref{eq.comm}. 
We begin with a technical lemma which slightly generalizes a result 
in \cite{BuCiRuVa1}. 
\begin{lemma}
\label{lem.2.1}
Let $h \in \Cc_1(\RR^4)$ and $g \in \Dc_2(\RR^4)$ such that  
$\supp(g) \perp \supp(h)$. Then
$$
[ A(h) , G(g) ]  =  [B(h),F(g)]=0 \   .
$$
\end{lemma}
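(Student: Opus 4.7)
I would reduce the statement to the causality hypothesis \eqref{eq.caus} by producing a single test function $f \in \Dc_2(\RR^4)$ with $\delta f = h$ and $\supp(f) \perp \supp(g)$. Once this is available, the defining relation $A(h) = F(f)$ gives
\[
[A(h), G(g)] \, = \, [F(f), G(g)] \, = \, 0
\]
by \eqref{eq.caus}, and the companion identity $[B(h), F(g)] = 0$ follows from the same argument with the roles of $F$ and $G$ (and correspondingly $A$ and $B$) exchanged.

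For the construction of $f$, I would proceed by a cutoff-and-correct procedure. Fix any $f_0 \in \Dc_2(\RR^4)$ with $\delta f_0 = h$, whose existence is granted by the Poincar\'e identity $\Cc_1(\RR^4) = \delta \Dc_2(\RR^4)$ recalled in the text. Because $\supp(h)$ and $\supp(g)$ are compact and spacelike separated, I can choose an open neighborhood $U$ of $\supp(h)$ whose closure is still spacelike separated from $\supp(g)$, together with a smooth cutoff $\chi$ that equals $1$ on a neighborhood of $\supp(h)$ and is compactly supported in $U$. Setting $f_1 \doteq \chi f_0$, the Leibniz rule for $\delta$ gives $h - \delta f_1 = r$, with $r \in \Dc_1(\RR^4)$ supported in $U \setminus \supp(h)$ and co-closed, the latter since $\delta h = 0$ and $\delta^2 = 0$.

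It remains to produce $u \in \Dc_2(\RR^4)$ with $\delta u = r$ and $\supp(u) \subset U$, because then $f \doteq f_1 + u$ satisfies $\delta f = h$ and $\supp(f) \subset U$, which is spacelike separated from $\supp(g)$. This is the main technical obstacle of the proof: one needs a localized Poincar\'e lemma asserting that a co-closed vector-valued test function supported in $U$ admits a $\delta$-primitive supported in $U$. Such local exactness is sensitive to the topology of $U$, but by refining the choices of $U$ and $\chi$, and exploiting the freedom to add $\delta$-closed corrections to $f_0$, one can place $\supp(r)$ in a region where this local inversion is available. The construction then parallels the one in \cite{BuCiRuVa1}, of which the present lemma is a mild strengthening, adapted to the weaker separation condition $\perp$ in place of $\Times$.
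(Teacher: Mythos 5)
There is a genuine gap, and it sits exactly where you flagged it: the single co-primitive $f$ with $\delta f = h$ and $\supp(f) \perp \supp(g)$ that your argument hinges on need not exist, and the "localized Poincar\'e lemma" step cannot be repaired by refining $U$, $\chi$, or by adding co-closed corrections to $f_0$, because the obstruction is a cohomology class determined by $h$ and the region, not an artifact of the construction. Concretely, take $h = l_{s_1,\gamma_1}$ a loop function supported in a thin tube around a spatial loop $\gamma_1$, and take $g \in \Dc_2(\RR^4)$ whose support is a time-symmetric toroidal neighborhood of a spatial loop $\gamma_2$ linked with $\gamma_1$, the two tubes being spacelike separated. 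The causal complement $U$ of $\supp(g)$ deformation retracts, via the time-squashing homotopy $H$ used in Section \ref{sec3}, onto the complement in the time-zero plane of a solid torus around $\bgamma_2$, so $H_1(U) \simeq \ZZ$ and the class of $\bgamma_1$ in $H_1(U)$ is $L(\bgamma_1,\bgamma_2)$ times the generator, hence nonzero. A co-primitive of $l_{s_1,\gamma_1}$ supported in $U$ exists precisely when the corresponding class in $H^3_c(U) \simeq H_1(U)$ vanishes (this is the dual form of the Poincar\'e lemma used in \cite{BuCiRuVa1,BuCiRuVa2}), so no admissible $f$ exists. Your version is even more restrictive, since you ask for $\supp(f)$ inside a small neighborhood $U$ of $\supp(h)$: that already fails for every loop-shaped $\supp(h)$, independently of $g$, because $\gamma_1$ is not null-homologous in its own tubular neighborhood. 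Note that if a single spacelike-separated co-primitive always existed, the causal commutators of Section \ref{sec3} would be forced to vanish for linked loops, contradicting the Roberts example; the nonexistence of such an $f$ is the whole point of the paper.

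The paper's proof circumvents this by decomposing the \emph{other} test function: using a partition of unity, $g = \sum_i g_i$ with each $g_i$ supported in a double cone $\Oc_i$ spacelike to $\supp(h)$. The causal complement of a double cone is topologically trivial enough that the Causal Poincar\'e Lemma of \cite{BuCiRuVa1} yields, for each $i$ separately, a co-primitive $f_i$ of $h$ with $\supp(f_i) \perp \Oc_i$; since $A(h) = F(f_i)$ for \emph{every} admissible choice of co-primitive, one gets $[A(h),G(g_i)] = [F(f_i),G(g_i)] = 0$ term by term from \eqref{eq.caus}. The essential mechanism is thus the freedom to use a different co-primitive of $h$ for each contractible piece of $\supp(g)$, rather than one co-primitive good for all of $\supp(g)$ at once; your proposal, which insists on the latter, cannot cover the topologically nontrivial configurations for which the lemma is actually needed.
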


\begin{proof}
Using a partition of unity we can decompose $g$ into a a finite sum 
$\sum_i g_i$ where $g_i$ is supported in a double cone $\Oc_i$ such 
that $\Oc_i \perp \supp (h)$ for any $i$.
By the Causal Poincar\'e Lemma (see \cite[Appendix]{BuCiRuVa1}) 
we can find for any $i$ a co-primitive $f_i$ of $h$,
$\delta f_i = h$, such that 
$\supp(f_i) \perp \Oc_i$. Hence, by \eqref{eq.caus}, one has  
$$
[ A(h) , G(g)] = \sum_i [A(h) , G(g_i)] = \sum_i [F(f_i) , G(g_i)]=0 \, .
$$ 
The same argument applies to the commutator between $B$ and $F$. 
\end{proof}

Using this result it is easily seen that the causal commutator of the 
intrinsic vector potentials, defined on test functions with
spacelike separated supports,   
is a central element, hence a superselected quantity.
The subsequent proposition is a slight generalization of 
results in \cite[Appendix]{BuCiRuVa1}.

\begin{proposition}
\label{prop.2.2}
Let $h,k \in \Cc_1(\RR^4)$  with $\supp(h) \perp \supp(k)$. 
Then the com\-mutator $[ A(h) , B(k) ]$ is translation invariant 
and hence an element of the center of~$\fP$. 
\end{proposition}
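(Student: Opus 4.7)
The plan is to prove translation invariance first, and then derive centrality by a spacelike-translation argument.

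\emph{Translation invariance.} Write $\tau_a$ for the translation of test functions by $a \in \RR^4$; the goal is to show the map $a \mapsto [A(\tau_a h), B(\tau_a k)]$ is constant. Using bilinearity of the commutator, decompose
\begin{equation*}
[A(\tau_a h), B(\tau_a k)] - [A(h), B(k)] = [A(\tau_a h - h), B(\tau_a k)] + [A(h), B(\tau_a k - k)].
\end{equation*}
The key step is to produce a co-primitive of $\tau_a h - h$ whose support lies in a thin tube around $\supp(h)$ of diameter $|a|$. Setting
\begin{equation*}
f_a^{\mu\nu} \, := \, \frac{1}{2}\int_0^1 \bigl(a^\mu (\tau_{sa} h)^\nu - a^\nu (\tau_{sa} h)^\mu\bigr)\, ds,
\end{equation*}
a short computation exploiting $\delta h = 0$ gives $\delta f_a = \tau_a h - h$, whence $A(\tau_a h - h) = F(f_a)$. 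For $|a|$ sufficiently small, $\supp(f_a)$ remains spacelike to $\supp(\tau_a k)$, and Lemma~\ref{lem.2.1} (with the roles of $F$ and $G$ exchanged) annihilates the first commutator on the right-hand side. The analogous construction applied to $\tau_a k - k$ disposes of the second. The map is therefore locally constant at the origin, and applying the same argument to the pair $(\tau_{a_0} h, \tau_{a_0} k)$---whose supports are still spacelike separated---for arbitrary $a_0 \in \RR^4$ yields local constancy everywhere, hence constancy on the connected space $\RR^4$.

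\emph{Centrality.} Set $C := [A(h), B(k)]$. By translation invariance, $C = [A(\tau_a h), B(\tau_a k)]$ for every $a$. Given any local generator $\Phi(\phi)$ of $\fP$, choose $a$ so large in a spacelike direction that $\supp(\phi)$ lies in the spacelike complement of $\supp(\tau_a h) \cup \supp(\tau_a k)$. The partition-of-unity plus Causal Poincar\'e Lemma argument from the proof of Lemma~\ref{lem.2.1}, together with the mutual locality of the generators of $\fP$, then gives $[A(\tau_a h), \Phi(\phi)] = [B(\tau_a k), \Phi(\phi)] = 0$. The Jacobi identity yields $[C, \Phi(\phi)] = 0$, and extending via the Leibniz rule over all polynomials in the generators places $C$ in the center of $\fP$.

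\emph{Main obstacle.} The technical heart is the explicit small-support co-primitive $f_a$ of $\tau_a h - h$: a generic co-primitive would spread far beyond $\supp(h)$ and destroy spacelike separation from $\supp(k)$, blocking any appeal to Lemma~\ref{lem.2.1}. The formula above works because $\tau_a h - h$ is the integral along $a$ of a derivative, so the natural ``antiderivative'' in the same direction stays localized near $\supp(h)$.
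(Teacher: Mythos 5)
Your proof is correct and follows essentially the same route as the paper: the explicit thin-tube co-primitive of $\tau_a h - h$ (the paper's $f^{\,y}$), Lemma~\ref{lem.2.1} to kill the extra commutators for small translations, iteration/connectedness to reach all of $\RR^4$, and then centrality by translating the pair far spacelike from a given local generator and using locality plus the Jacobi identity. The only cosmetic difference is that you translate $h$ and $k$ simultaneously via a bilinear decomposition, whereas the paper translates them one at a time.
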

\begin{proof}
Let $h \in \Cc_1(\RR^4)$ and let $h_y$ be its 
translate for any $y \in \RR^4$. We proceed to the test function
$f^{\, y} \in \Dc_2(\RR^4)$ given by 
$$
f^{\, y \, \mu \nu}(x)   
\doteq (1/2)\,\int_0^1 \! du \, (x^{\mu} h^{\nu}(x - uy) 
- y^{\mu} h^{\nu}(x - u y)) \ , \quad x\in\RR^4 \, . 
$$
It is a co--primitive of $(h - h_y) \in \Cc_1(\RR^4)$, that is,
$\delta f^{\, y} = (h - h_y)$.
Moreover, $f^{\, y}$ has support in the cylindrical region 
$\{ \supp(h) + u y : 0 \leq u \leq 1 \}$. So, for sufficiently small
translations $y \in \RR^4$, we have $\supp(f^{\, y}) \perp \supp(k)$.
This implies by Lemma \ref{lem.2.1} 
$$
\begin{array}{lcl}
[A(h), B(k)] & = & [A(\delta f^{\, y} +h_y), B(k)] \\ & = &  
[F(f^{\, y}), B(k)] + [A(h_y), B(k)] \\ & = &
[A(h_y), B(k)] \, . 
\end{array}
$$
Applying the same argument to the
translates of $k$ one finds that 
$$
[A(h), B(k)] = [A(h_{y}), B(k_{y})]
$$
for sufficiently small $y$. This equality extends to arbitrary translations 
$y \in \RR^4$ by iteration. 
It then follows from translation covariance, locality, 
and the Jacobi identity that,
for any given $h, k \in \Cc_1(\RR^4)$ 
with $\supp(h) \perp \supp(k)$,
the commutator $[A(h), B(k)]$ commutes with any other element of $\fP$.
Hence, being itself an element of $\fP$,
it lies in the center of $\fP$, completing the proof.
\end{proof}

\section{Homology invariance of causal commutators}
\label{sec3}

Turning to the geometrical analysis of the causal commutators, 
we recall from \cite[Sect.~3]{BuCiRuVa2} some definitions 
and facts about loops, surfaces and test functions.
Given a test function $s \in \Dc_0(\RR^4)$ and a 
loop $\gamma : [0,1] \to \RR^4$, one can define 
a test function in $l_{s,\gamma} \in \Cc_1(\RR^4)$, 
called \emph{loop function}, putting 
$$
l_{s,\gamma}^\mu(x) \doteq \int_0^1 \! du \, s( x + \gamma(u) ) \, 
\dot{\gamma}^\mu(u) \, ,
$$
where $\dot{\gamma}^\mu$ is the tangent vector. It is apparent that 
$\supp(l_{s,\gamma}) \subset \supp(s) + \gamma$. One can perform 
a similar construction for surfaces 
$\sigma : [0,1] \times [0,1] \rightarrow \RR^4$ and define
corresponding test functions $f_{s,\sigma} \in \Dc_2(\RR^4)$, putting 
$$
f_{s,\sigma}^{\mu\nu}(x) \, \doteq \, 
- (1/2) \, \int_0^1 \! \! \int_0^1 \! d^2u \, s( x + \sigma(u) ) \, 
\sigma^{\mu\nu}(u) \, ,
$$
where $\sigma^{\mu\nu}$ is the Jacobian fixed by $\sigma$. 
Clearly, $\supp(f_{s,\sigma}) \subset \supp(s) + \sigma$.
One then has the relation 
\begin{equation}
\label{eq.coStokes}
\delta f_{s,\sigma} \ = \ l_{s,\partial \sigma} \, ,
\end{equation}
where $\partial \sigma$ is the boundary of $\sigma$. 
Since this is a straight forward consequence of Stokes' theorem,
its proof is omitted.

\medskip
In the next step we prove two invariance properties
of the commutators of interest here. 
The first one concerns the dependence of the commutator
on the co-cohomology class of the loop functions 
entering in the smearing of the operators; to be precise:  
the 0-co-cohomology class of the underlying scalar functions
$s \in \Dc_0(\RR^4)$, which is fixed by 
its integral $\int \! dx \, s(x)$. 

\begin{lemma}
\label{lem.3.1}
Let $\gamma_1,\gamma_2$ be two spacelike separated loops, 
let $\Oc_1,\Oc_2$  be two open balls centered about the origin 
such that $\Oc_1 + \gamma_1 \perp \Oc_2 + \gamma_2$, 
and let $s_1,s_2 \in \Dc_0(\RR^4)$ be test 
functions with support in $\Oc_1$, respectively $\Oc_2$.  
Then, for any test function $\widehat s_1$ 
with $\supp(\widehat s_1) \subset \Oc_1$ and  
$\int \! dx \, \widehat s_1(x) = 1$, one has 
$$
[ A(l_{s_1,\gamma_1})  , B(l_{s_2,\gamma_2}) ] 
\ = \ \kappa \,  [ A(l_{\widehat s_{1},\gamma_1})  , B(l_{s_2,\gamma_2}) ]  \, ,
$$
where 
$\kappa \doteq \int \! dx \, s_1(x)$.
\end{lemma}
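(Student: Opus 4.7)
\smallskip

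\noindent\textbf{Proof proposal.}
The plan is to exploit the decomposition $s_1 = \kappa\, \widehat{s}_1 + \widetilde{s}_1$, in which the residual $\widetilde{s}_1 \doteq s_1 - \kappa\, \widehat{s}_1$ is supported in $\Oc_1$ and has vanishing integral $\int\! dx\, \widetilde{s}_1(x) = 0$. Since the maps $s \mapsto l_{s,\gamma}$ and $h \mapsto A(h)$ are both linear,
$$
A(l_{s_1, \gamma_1}) \ = \ \kappa\, A(l_{\widehat{s}_1, \gamma_1}) + A(l_{\widetilde{s}_1, \gamma_1}) \, .
$$
The asserted equality therefore reduces to proving the vanishing of the causal commutator $[A(l_{\widetilde{s}_1,\gamma_1}), B(l_{s_2,\gamma_2})]$.

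To establish this vanishing, I would exhibit a test function $F_1 \in \Dc_2(\RR^4)$, with support in the tube $\Oc_1 + \gamma_1$, such that $\delta F_1 = l_{\widetilde{s}_1,\gamma_1}$. Granted such an $F_1$, the definition of the intrinsic vector potential gives $A(l_{\widetilde{s}_1,\gamma_1}) = F(F_1)$, and by hypothesis $\supp(F_1) \subset \Oc_1 + \gamma_1 \perp \Oc_2 + \gamma_2 \supset \supp(l_{s_2,\gamma_2})$. Lemma~\ref{lem.2.1} then yields $[F(F_1), B(l_{s_2,\gamma_2})] = 0$, finishing the argument.

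The construction of $F_1$ is the key technical step. Because $\Oc_1$ is a ball and $\int \widetilde{s}_1 = 0$, the Poincar\'e lemma for compactly supported top-degree forms supplies a vector-valued test function $w \in \Dc_1(\RR^4)$ with $\supp(w) \subset \Oc_1$ and $\partial_\nu w^\nu = \widetilde{s}_1$. I would then set
$$
F_1^{\mu\nu}(x) \ \doteq \ \tfrac{1}{2} \int_0^1 \! du \, \bigl[\, w^\nu(x + \gamma_1(u))\, \dot{\gamma}_1^\mu(u) - w^\mu(x + \gamma_1(u))\, \dot{\gamma}_1^\nu(u) \,\bigr] \, ,
$$
which is manifestly skew symmetric and supported in $\Oc_1 + \gamma_1$. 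The identity $\delta F_1 = l_{\widetilde{s}_1, \gamma_1}$ follows by differentiating under the integral: in one of the resulting terms the contraction $\partial_\nu w^\mu(x + \gamma_1(u))\, \dot{\gamma}_1^\nu(u)$ is the total $u$-derivative $\frac{d}{du}\, w^\mu(x + \gamma_1(u))$, which integrates to zero because $\gamma_1$ is closed; the other term reproduces $l_{\widetilde{s}_1, \gamma_1}$ through $\partial_\nu w^\nu = \widetilde{s}_1$.

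The principal obstacle is topological: a spanning surface $\sigma_1$ for $\gamma_1$ generally leaves the causal complement of $\Oc_2 + \gamma_2$, so formula~\eqref{eq.coStokes} cannot be used directly to produce a co-primitive of $l_{\widetilde{s}_1, \gamma_1}$ localized in the tube. The condition $\int \widetilde{s}_1 = 0$ is precisely what allows the co-primitive $F_1$ to be kept inside $\Oc_1 + \gamma_1$, and this is the manifestation, in the present argument, of the $0$-cohomology invariance alluded to before the statement of the lemma.
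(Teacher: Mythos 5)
Your proposal is correct and follows essentially the same route as the paper's own proof: decompose $s_1 = \kappa\,\widehat s_1 + (s_1-\kappa\,\widehat s_1)$, use the vanishing integral to obtain a divergence-primitive $w$ supported in $\Oc_1$, construct from it the skew two-tensor co-primitive of $l_{s_1-\kappa\widehat s_1,\gamma_1}$ supported in the tube around $\gamma_1$, and eliminate the residual commutator via Lemma~\ref{lem.2.1}. The only differences are cosmetic (sign conventions for the co-derivative and your explicit verification of $\delta F_1 = l_{\widetilde s_1,\gamma_1}$, which the paper leaves implicit).
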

\begin{proof}
Since the integral of the function 
$(s_1-\kappa\, \widehat s_1)$ is zero, there is a 
test function $h \in \Dc_1(\RR^4)$ 
such that $\supp(h) \subset \Oc_1$ and 
$\delta h = (s_1 - \kappa \, \widehat s_1)$. Thus
the associated test function $f \in \Dc_2(\RR^4)$,  
given by
$$
 f^{\mu\nu}(x) \doteq (1/2) \int_0^1 \! du \, 
\left(h^\mu(x-\gamma_1(u)) \, \dot{\gamma}^\nu_1(u)- 
h^{\nu}(x-\gamma_1(u)) \, \dot{\gamma}^\mu_1(u)\right)
\, , 
$$   
has support in 
$\Oc_1 + \gamma_1$ and  
$\delta f= (l_{s_1,\gamma_1} - \kappa \, l_{\widehat s_1,\gamma_1})$. 
According to Lemma \ref{lem.2.1}, this implies 
\begin{align*}
[ A(l_{s_1,\gamma_1})  , B(l_{s_2,\gamma_2})] & = 
\kappa\,[ A(l_{\widehat s_1,\gamma_1})  , B(l_{s_2,\gamma_2})] + 
[ A(\delta f)  , B(l_{s_2,\gamma_2})] \\
& =
\kappa\,[ A(l_{\widehat s_1,\gamma_1})  , B(l_{s_2,\gamma_2}) + 
[ F(f)  , B(l_{s_2,\gamma_2})] \\ 
& = \kappa\,[ A(l_{\widehat s_1,\gamma_1})  , B(l_{s_2,\gamma_2})] \, , 
\end{align*} 
completing the proof.
\end{proof}

\noindent Since the commutator in the lemma vanishes if either 
$\int \! dx  \, s_1(x) =0 $ or $\int \! dx \, s_2(x) =0$, 
we adopt below the following convention.

\medskip 
\noindent \textbf{Standing assumption:} 
The scalar functions $s \in \Dc_0(\RR^4)$ entering in the loop 
functions~$l_{s,\gamma}$ are normalized, \ie $\int \! dx \, s(x) = 1$. 

\medskip
The second property of the causal commutators 
we want to establish is their homology invariance
with regard to deformations of the underlying loops. 

\begin{lemma}
\label{lem.3.2}
Let $\gamma_1,\gamma_2$ be two spacelike separated loops, 
let $\Oc_1,\Oc_2$  be open balls centered about the origin 
such that $\Oc_1 + \gamma_1 \perp \Oc_2 + \gamma_2$,  
and let $s_1,s_2 \in \Dc_0(\RR^4)$ be normalized  
functions with support in $\Oc_1$, respectively $\Oc_2$. 
Then, for any loop $\widehat \gamma_1$ which is homologous to 
$\gamma_1$ in the causal complement of $\gamma_2$
and any normalized function~$\widehat s_1$ 
having support in an open ball $\widehat \Oc_1$ 
about the origin such that $\widehat \Oc_1 + \widehat \gamma_1 
\perp \Oc_2 + \gamma_2$, one has 
$$
[ A(l_{s_1,\gamma_1})  , B(l_{s_2,\gamma_2}) ] 
\, = \,
[ A(l_{\widehat s_1,\widehat \gamma_1}) , B(l_{s_2,\gamma_2}) ] \, .
$$
An analogous relation holds if 
$(s_2, \gamma_2)$ is replaced by a homologous  
pair~$(\widehat s_2, \widehat \gamma_2)$.  
\end{lemma}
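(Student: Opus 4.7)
The plan is to reduce the homology invariance to an application of the Stokes-type identity \eqref{eq.coStokes}, combined with the vanishing-by-locality from Lemma~\ref{lem.2.1} and the normalization flexibility from Lemma~\ref{lem.3.1}. Since Lemma~\ref{lem.2.1} is symmetric in $F$ and $G$, it suffices to treat deformations of the first loop; the analogous statement for $(s_2,\gamma_2)$ then follows by the same argument with the roles of $A$ and $B$ interchanged.

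First I would use the homology hypothesis to fix a 2-chain $\sigma$ (a finite sum of parametrized surfaces) with $\partial \sigma = \widehat\gamma_1 - \gamma_1$ whose image lies in the causal complement of $\gamma_2$. Because $\sigma$, $\gamma_1$ and $\widehat\gamma_1$ are compact sets spacelike to the compact set $\Oc_2 + \gamma_2$, a uniform spacelike margin exists, so I may choose one open ball $\widetilde\Oc$ about the origin contained in $\Oc_1 \cap \widehat\Oc_1$ and a normalized function $\widetilde s \in \Dc_0(\RR^4)$ with $\supp(\widetilde s) \subset \widetilde\Oc$ such that $\widetilde\Oc + \sigma$, $\widetilde\Oc + \gamma_1$ and $\widetilde\Oc + \widehat\gamma_1$ are each spacelike separated from $\Oc_2 + \gamma_2$.

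Next I would form the 2-tensor test function $f_{\widetilde s, \sigma} \in \Dc_2(\RR^4)$ via the recipe recalled at the start of the section (extended by linearity to chains). The identity \eqref{eq.coStokes} yields $\delta f_{\widetilde s, \sigma} = l_{\widetilde s, \widehat\gamma_1} - l_{\widetilde s, \gamma_1}$, so that $A(l_{\widetilde s, \widehat\gamma_1}) - A(l_{\widetilde s, \gamma_1}) = F(f_{\widetilde s, \sigma})$. Since $\supp(f_{\widetilde s, \sigma}) \subset \widetilde\Oc + \sigma$ is spacelike to $\supp(l_{s_2, \gamma_2}) \subset \Oc_2 + \gamma_2$, Lemma~\ref{lem.2.1} gives $[F(f_{\widetilde s, \sigma}), B(l_{s_2,\gamma_2})]=0$, and hence $[A(l_{\widetilde s, \gamma_1}), B(l_{s_2, \gamma_2})] = [A(l_{\widetilde s, \widehat\gamma_1}), B(l_{s_2, \gamma_2})]$.

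Finally I would invoke Lemma~\ref{lem.3.1} twice, with $\kappa = 1$ thanks to the standing normalization convention: once on the left to replace $\widetilde s$ by $s_1$ (both supported in $\Oc_1$, with $\widetilde\Oc + \gamma_1 \perp \Oc_2 + \gamma_2$), and once on the right to replace $\widetilde s$ by $\widehat s_1$ (both supported in $\widehat\Oc_1$, with $\widetilde\Oc + \widehat\gamma_1 \perp \Oc_2 + \gamma_2$). Chaining the three equalities delivers the claim. The main obstacle is the geometric bookkeeping of the second step, namely arranging that a single ball $\widetilde\Oc$ can be chosen small enough to make all three spacelike separation conditions hold simultaneously; this rests on compactness of the loops and chain, together with the openness of the spacelike-complement condition.
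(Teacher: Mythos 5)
Your proposal is correct and follows essentially the same route as the paper: a surface (chain) $\sigma$ bounding $\gamma_1 - \widehat\gamma_1$ in the causal complement of $\gamma_2$, the identity \eqref{eq.coStokes} to write the difference of loop functions as $\delta f_{\cdot,\sigma}$, Lemma~\ref{lem.2.1} to kill the commutator with $F(f_{\cdot,\sigma})$, and Lemma~\ref{lem.3.1} to adjust the scalar functions. Your double use of Lemma~\ref{lem.3.1} (once for $s_1$, once for $\widehat s_1$) is in fact slightly more explicit than the paper's single application in covering the ``any normalized $\widehat s_1$'' clause of the statement.
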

\begin{proof}
According to the  hypothesis on $\widehat \gamma_1$,  
there is a surface $\sigma \subset \RR^4$ such that 
$\sigma \perp \gamma_2$ and 
$\partial \sigma = \gamma_1 - \widehat \gamma_1 $. 
Let $\widehat \Oc_1$ be any open ball
about the origin such that 
$\widehat \Oc_1 \subset \Oc_1$ and 
$\widehat \Oc_1 + \sigma \perp \Oc_2 + \gamma_2$, and pick  
a function $\widehat s_1$ with support in 
$\widehat \Oc_1$. Then, by Lemma \ref{lem.3.1}, we have 
$$
[ A(l_{s_1,\gamma_1})  , B(l_{s_2,\gamma_2}) ] 
\, = \,
[ A(l_{\widehat s_1,\gamma_1}) , B(l_{s_2,\gamma_2}) ] \, . 
$$
As in the preceding lemma, we consider the 
test function $f_{\widehat s_1,\sigma} \in \Dc_2(\RR^4)$. It has support 
in $\widehat \Oc_1 + \sigma$, and 
$\delta f_{\widehat s_1,\sigma} =   
l_{\widehat s_1, \widehat \gamma_1} - 
l_{\widehat s_1, \gamma_1}$ by  
$\eqref{eq.coStokes}$.
Since $\widehat \Oc_1 + \sigma\perp \Oc_2+\gamma_2$,  
it follows from Lemma \ref{lem.2.1} that   
\begin{align*}
[ A(l_{s_1,\gamma_1})  , B(l_{s_2,\gamma_2}) ] 
 & = 
[ A(l_{\widehat s_1,\gamma_1}) , B(l_{s_2,\gamma_2}) ] \\
& = 
[ F(f_{\widehat s_1,\sigma}) , B(l_{s_2,\gamma_2}) ] + 
[ A(l_{\widehat s_1, \widehat \gamma_1}) , B(l_{s_2,\gamma_2}) ]\\
 & = 
[ A(l_{\widehat s_1, \widehat \gamma_1}) , B(l_{s_2,\gamma_2}) ] \, ,
\end{align*}
proving the Lemma.
\end{proof}

\medskip
It follows from this lemma that the commutators are stable under
deformations of the underlying loop functions within the given 
limitations. With this information we can turn now to the 
analysis of their relation to the linking numbers, 
where we begin by recalling some topological definitions.
For any simple loop $\bgamma$ in $\RR^3$, the first homology group  
$H_1(\RR^3 \backslash \bgamma)$ is isomorphic to~$\ZZ$ 
(Alexander duality, \cite[Chpt.~3]{Hatcher}).
Given any pair of disjoint linked loops 
$\bgamma_1 \, , \bgamma_2 \subset \RR^3$, where $\bgamma_1$ is simple, 
their linking 
number  $L(\bgamma_1, \bgamma_2)$ is equal to  the homology 
class of $\bgamma_2$ in  $H_1(\RR^3 \backslash \bgamma_1) \simeq \ZZ$,
respectively of $\bgamma_1$ in  $H_1(\RR^3 \backslash \bgamma_2)$, 
cf.~\cite[Chpt.~1]{Hatcher}, \cite[Chpt.~5]{Rolfsen}. 
It coincides with the 
analytical definition given in terms of the \emph{Gauss integral},  
\begin{equation}
\label{eq.gauss}
L(\bgamma_1, \bgamma_2) \, \doteq \,
(1/4 \pi) \, 
\int_0^1 \! du \! \int_0^1 \! dv \, 
\frac
{ \det \, ( \dot{\bgamma_1}(u) \, , \, \dot{\bgamma_2}(v) 
\, , \, \bgamma_1(u) - \bgamma_2(v) ) }
{ | \bgamma_1(u) - \bgamma_2(v) |^3 } \, .
\end{equation}

\medskip
Although Lemma \ref{lem.3.2} implies that the commutators of the 
intrinsic vector potentials $A$, $B$ are stable under
deformations of the loop functions, there is some 
subtle point to be observed. It is known that
loops in  $\RR^4 \backslash \gamma$ can be disentangled 
(trivialized), namely $H_1(\RR^4 \backslash \gamma) = 0$. 
Yet the homology invariance of 
Lemma \ref{lem.3.2} refers to the \emph{causal
complement} and not to the complement of the curves in $\RR^4$. 
A direct connection between this notion and the 
notion of linking numbers is obtained 
by considering a particular class of loops in Minkowski space.
A loop $\gamma:[0,1]\to\RR^4$ is said to be \emph{spatial} whenever 
the points $\gamma(u)$, $\gamma(v)$ are spacelike with respect
to each other for any $u \ne v$ 
and $(u,v) \ne (0,1),(1,0)$. Such loops are simple and so are 
their projections onto the time zero plane, given by  
$\RR^4 \supset \gamma = (\gamma_0, \bgamma) \mapsto \bgamma \subset \RR^3$. 

\medskip 
Given any two spacelike separated,  
spatial loops $\gamma_1 \perp \gamma_2$, one can continuously deform 
both of them to disjoint loops in the 
time-zero plane~$(0,\RR^3)$, without affecting their 
spacelike separation. This is accomplished by the function 
$H: [0,1] \times \RR^4 \rightarrow \RR^4$, given by 
$$
H(u,(x_0,\bix)) \doteq ((1-u)x_0,\,\bix) 
\, , \quad  u \in [0,1] \, , \, 
(x_0,\,\bix) \in \RR^4 \, . 
$$
Clearly, $H$ is continuous, 
$H(0,\, \cdot \,) \upharpoonright \RR^4 
= id$ and 
$H(1,\, \cdot \,) \upharpoonright \RR^4
= P_{\, 0}$, the projection onto $(0,\RR^3) \subset \RR^4$. Putting 
$\gamma_{1,u} \doteq H(u,\gamma_1)$, $\gamma_{2,u} \doteq H(u,\gamma_2)$, 
one has $\gamma_{1,u} \perp \gamma_{2,u}$ for any $u \in [0,1]$ and
$\gamma_{1,1} = (0, \bgamma_1)$, $\gamma_{2,1} = (0, \bgamma_2)$
are disjoint loops lying in $(0, \RR^3)$.  
These observations allow us to introduce 
in $\RR^4$ a \textit{causal} linking number for spacelike separated,
spatial loops $\gamma_1$, $\gamma_2$, given by
\begin{equation}
\label{linking}
L_c(\gamma_1,\gamma_2) \, \doteq \, 
L(\bgamma_1, \bgamma_2) \, \in H_1(\RR^3 \backslash \bgamma_2) \simeq \ZZ \, .
\end{equation}

We can establish now the main result of this section, 
stating the proportionality of the causal commutators of intrinsic 
vector potentials to the causal linking numbers of the underlying loops. 
\begin{proposition}
\label{prop.3.3}
Let $\gamma_1,\gamma_2$ be spacelike separated, spatial loops,  
let $\Oc_1,\Oc_2$  be open balls centered about the origin 
such that $\Oc_1 + \gamma_1 \perp \Oc_2+\gamma_2$, and 
let $s_1,s_2 \in \Dc_0(\RR^4)$ be normalized  
functions with support in $\Oc_1$, respectively~$\Oc_2$. Then 
$$
[ A(l_{s_1,\gamma_1}) , B(l_{s_2,\gamma_2}) ] 
\ = \ 
i \, L_c(\gamma_1,\gamma_2) \ Z_{A,B} \, ,
$$
where $L_c(\gamma_1,\gamma_2) \in \ZZ$ is the causal linking
number of $\gamma_1, \gamma_2$ and 
$Z_{A,B} = - Z_{B,A}$ is a fixed hermitean element of the center of $\fP$
which does not depend on the choice of the loops and
test functions within the above limitations. 
\end{proposition}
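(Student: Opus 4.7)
The plan is to apply Lemma \ref{lem.3.2} twice -- first to reduce $\gamma_1$ to a collection of standard meridional loops around $\gamma_2$, then to reduce $\gamma_2$ symmetrically -- thereby identifying the commutator with $n = L_c(\gamma_1,\gamma_2)$ times a universal reference commutator, which will be declared to be $i Z_{A,B}$. The geometric input is that the causal complement in $\RR^4$ of a spatial loop has first homology $\ZZ$, generated by any meridional loop, and that the class of a spatial test loop in this group coincides with its causal linking number.

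First I would apply the homotopy $H(u,(x_0,\bix)) = ((1-u)x_0,\bix)$ in small alternating steps, deforming $\gamma_1$ and $\gamma_2$ a little at a time so that at each step the moving loop remains spacelike to the other throughout its motion. Each such small step is covered by Lemma \ref{lem.3.2} and leaves the commutator invariant; after finitely many steps one may assume $\gamma_1,\gamma_2 \subset (0,\RR^3)$. The causal complement $V$ of $\gamma_2$ in $\RR^4$ then deformation retracts onto $\RR^3 \setminus \bgamma_2$ via $(t,\bix) \mapsto (0,\bix)$, the admissible time interval $|t| < d(\bix,\bgamma_2)$ being convex around zero. Alexander duality gives $H_1(V) \simeq \ZZ$, with the class of $\gamma_1$ equal to $n = L_c(\gamma_1,\gamma_2)$. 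Pick $n$ pairwise disjoint standard meridians $\gamma^{(1)},\dots,\gamma^{(n)} \subset V$, each winding once around $\bgamma_2$; then $\gamma_1 - \sum_i \gamma^{(i)}$ bounds a 2-chain $\sigma \subset V$ which can be arranged at strictly positive spatial distance from $\gamma_2$. Using Lemma \ref{lem.3.1} to shrink the scalar supports so that the $n$ meridian loop functions have pairwise disjoint supports, the coboundary relation \eqref{eq.coStokes} together with Lemma \ref{lem.2.1} yields
\[
  [A(l_{s_1,\gamma_1}), B(l_{s_2,\gamma_2})] \,=\, n\,[A(l_{\widehat s_1,\gamma^{\mathrm{std}}}), B(l_{s_2,\gamma_2})].
\]

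A symmetric application of the same argument in the second slot, using that $L_c(\gamma^{\mathrm{std}},\gamma_2) = 1$ by the symmetry of the linking number, reduces the right hand side to a reference commutator at a fixed pair of unit-linked spatial loops. This reference value is independent of the chosen pair because the configuration space of ordered pairs of disjoint simple loops in $\RR^3$ with linking number one is path-connected, and any path can be realized by a finite sequence of Lemma \ref{lem.3.2} deformations. Call this common value $i Z_{A,B}$. Proposition \ref{prop.2.2} places $Z_{A,B}$ in the center of $\fP$. Antisymmetry $Z_{A,B} = -Z_{B,A}$ follows from the antisymmetry of the commutator combined with the symmetry of $L_c$, and hermiticity of $Z_{A,B}$ follows because the commutator of two hermitean operators smeared with real-valued test functions is anti-hermitean, so $i^{-1}$ times it is hermitean.

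The main obstacle is the integrality step: decomposing a winding-$n$ cycle, modulo a causally supported boundary, into $n$ disjoint meridional copies so that the associated loop functions literally sum. This requires geometric coordination of the 2-chain $\sigma$ with the $n$ meridians, keeping $\sigma$ at strictly positive spatial distance from $\gamma_2$ so that $\supp(f_{\widehat s_1,\sigma}) \perp \supp(l_{s_2,\gamma_2})$ and Lemma \ref{lem.2.1} applies, together with scale-matching of the scalar supports through Lemma \ref{lem.3.1} so that the meridian loop functions have pairwise disjoint supports and combine additively.
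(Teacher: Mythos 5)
Your proposal is essentially correct and follows the same architecture as the paper's proof: deform both spatial loops into the time-zero plane via the homotopy $H$ (the paper does this in one stroke, citing the remark before the proposition, while you do it in small alternating steps -- both are fine), then use the homology invariance of Lemma \ref{lem.3.2} in each slot to reduce to a fixed reference commutator, which is declared to be $i\,Z_{A,B}$; hermiticity and antisymmetry are handled the same way in both arguments. The one genuinely different technical point is how the integer factor is extracted. The paper replaces $\bgamma_1$ by the $\lambda$-fold composition $\balpha_\lambda = \balpha_1 * \cdots * \balpha_1$ of a single generating circle and uses reparametrization invariance of loop functions, $l_{\widehat s_1,\balpha_\lambda} = \lambda\, l_{\widehat s_1,\balpha_1}$, so the factor appears from a single application of Lemma \ref{lem.3.2}. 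You instead decompose the class into $n$ pairwise disjoint meridians and use linearity of $A$ in the test function; this works, but note that the $n$ meridian commutators are not literally equal until you apply Lemma \ref{lem.3.2} once more to move each meridian to the standard position -- a step you gloss over when writing the right-hand side as $n$ times a single term. The paper's composed-loop trick avoids this extra bookkeeping.

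One sub-claim in your independence argument is false as stated: the configuration space of ordered pairs of disjoint \emph{simple} (embedded) loops in $\RR^3$ with linking number one is not path-connected through embeddings, since isotopy type of a two-component link is a strictly finer invariant than the linking number (e.g.\ a unit-linked pair with one knotted component is not isotopic to the Hopf link). Fortunately this stronger statement is not needed: Lemma \ref{lem.3.2} only requires homology in the causal complement, which is insensitive to knotting, so any unit-linked pair is reduced to the standard pair of circles by two successive applications of the lemma (first slot, then second slot), exactly as the paper does. With that repair -- replacing the isotopy/path-connectedness appeal by the homological reduction -- your argument is complete and equivalent to the paper's.
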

\begin{proof}
According to Lemma \ref{lem.3.2} and the preceding 
remark concerning the deformation of spacelike separated,
spatial loops, one has 
$$ [ A(l_{s_1,\gamma_1}) , B(l_{s_2,\gamma_2}) ] 
\ = \ [ A(l_{\widehat s_1,\bgamma_1}) , B(l_{\widehat s_2,\bgamma_2}) ] \, ,
$$
where the normalized test functions $\widehat s_1, \widehat s_2$ 
have small enough supports, as required by Lemma \ref{lem.3.2}. Let 
$$
\lambda \doteq L_c(\gamma_1,\gamma_2) = 
L(\bgamma_1, \bgamma_2) \in \ZZ 
$$ 
be the (without loss of generality positive) 
causal linking number for the given 
loops $ \gamma _1,  \gamma _2$. The
projected loop $\bgamma_1$ is homologous in 
$\RR^3 \backslash \bgamma_2$ to the 
$\lambda$-fold composition $\balpha_\lambda  
\doteq \balpha_1*\cdots* \balpha_1$ 
of a generating circle 
$\balpha_1$ of the homology group $H_1(\RR^3 \backslash \bgamma_2)$,  
\ie $L(\balpha_1, \bgamma_2) = 1$. 
Thus, by another application of Lemma~\ref{lem.3.2}, one gets  
$$
 [ A(l_{\widehat s_1, \bgamma_1}) , B(l_{\widehat s_2,\bgamma_2}) ] 
= \lambda \, [ A(l_{{\widehat s}^{\, \prime}_1, \balpha_1}) , 
B(l_{{\widehat s}^{\, \prime}_2,\bgamma_2})] ,
$$
where ${\widehat s}^{\, \prime}_1$, ${\widehat s}^{\, \prime}_2$
are test functions
complying with the support conditions in the lemma.
The appearance of the factor $\lambda$ follows from the fact 
that loop functions are invariant under changes of the parametrization, 
so \mbox{$l_{\widehat s_1,\balpha_\lambda} = \lambda \, l_{\widehat s_1,\balpha_1}$}. 
Now $\bgamma_2$ can in turn be regarded as generator of 
the homology group $H_1(\RR^3 \backslash \balpha_1)$, hence it 
is homologous in  $\RR^3 \backslash \balpha_1$
to a circle $\balpha_2$. By a final application of 
Lemma~\ref{lem.3.2}, we therefore arrive at 
$$
[ A(l_{{\widehat s}^{\, \prime}_1, \balpha_1}) , 
B(l_{{\widehat s}^{\, \prime}_2, \bgamma_2})]
= [ A(l_{s , \balpha_1}) , 
B(l_{s ,\balpha_2})] 
$$
with a normalized test function $s$, 
satisfying the support conditions. 
Note that the circles $\balpha_1$, $\balpha_2$   
and the function~$s$ can be chosen independently of 
the initial data $\gamma_1$, $\gamma_2$ and $s_1$, $s_2$; in particular 
$[ A(l_{s, \balpha_1}) , B(l_{s, \balpha_2})] = [ A(l_{s, \balpha_2}) , B(l_{s, \balpha_1})]$.
The operator 
\mbox{$Z_{A,B} \doteq -i \, [ A(l_{s , \balpha_1}) , B(l_{s ,\balpha_2})] 
= - Z_{B,A}$} is contained in the center
of~$\fP$, cf.\ Lemma~\ref{lem.2.1}, and 
$Z_{A,B}^* = Z_{A,B}$ since $A,B$ are hermitean.
The statement then follows from the preceding equalities. 
\end{proof}
This proposition shows that the causal commutators of intrinsic
vector potentials, smeared with loop functions based on 
spacelike separated, spatial loops, 
can be interpreted as topological charges with values
in $\ZZ$. The fixed central element, multiplying the 
linking numbers, sets their scale, which in general depends on
the underlying theory.

\section{Causal commutators and the mass spectrum}
\label{sec4}

We turn now to the question under which 
circumstances the causal commutators can be different from zero. 
An answer is provided by relating 
it to the mass spectrum of the underlying theory. To this end 
we assume that the polynomial algebra $\fP$ of local field 
operators is irreducibly represented on the vacuum Hilbert 
space (sector) $\Hc$, cf.~the Wightman framework \cite{StWi}. 
On $\Hc$ there acts a continuous, unitary 
representation $U$ of the spacetime translations $\RR^4$, 
implementing the action of the translations on the local field operators,
$$
U(x) F(f) U(x)^{-1} = F(f_x) \, , \quad x \in \RR^4 \, ,
$$
and analogously for the local field $G$. The
representation $U$ satisfies the relativistic spectrum condition,  
\ie the joint spectrum of its generators $P$ has support in 
the forward lightcone $V_+$, and there is a unique, 
one-dimensional, translational 
invariant subspace of $\Hc$; it is generated by the vacuum vector 
$\Omega \in \Hc$ and lies in the domain of the elements of
$\fP$. 

\medskip 
These assumptions imply that the causal commutators,
being central elements of $\fP$ according to Proposition 
\ref{prop.2.2}, 
are represented on $\Hc$ by multiples of the identity, which 
coincide with their vacuum expectation values, 
$$
[A(h), B(k)] = \langle \Omega, [A(h), B(k)] \Omega \rangle \, 1 \, , 
\quad \supp \, h \perp \supp \, k \, .
$$
It enables us to determine their dependence on the
mass spectrum by an application
of the Jost-Lehmann-Dyson representation 
to the underlying local field operators $F,G$, cf.\ for example
\cite[Lem.~6.2]{DoHaRo}. Namely, 
for any pair of test functions $f,g \in \Dc_2(\RR^4)$, 
having support in spacelike separated double cones, one 
has\footnote{In \cite{DoHaRo} this relation was established in a 
framework of bounded local operators. Since the vector
$\Omega$ lies in the domain of the field operators, the argument
given there extends to the present case.}
$$
\langle \Omega, F(f) E(\Delta) G(g) \Omega \rangle 
= \langle \Omega, G(g) E(\Delta) F(f) \Omega \rangle, 
$$
where $E(\Delta)$ is the spectral projection of 
the mass operator $M = (P^2)^{1/2}$ for any given  
Borel set $\Delta \subset \RR$. It follows by 
integration that this relation still holds if 
one replaces $E(\Delta)$ by $c(M) E(\Delta)$, where $c(M)$ is any
continuous, bounded function of the mass operator. Thus, 
the vacuum expectation values of the commutators of local 
fields still vanish at spacelike distances 
if one arbitrarily rescales the 
intermediate states for different values of the mass. 

\medskip 
With this information we can turn now to the analysis of the
causal commutators. Given any 
test function $h \in \Cc_1(\RR^4)$, one has $\square h \in \Cc_1(\RR^4)$,
where $\square$ is the d'Alembertian. Since 
$\partial_\nu h^\nu = 0$, it follows that 
$\partial_\nu(\partial^\nu h^\mu - \partial^\mu h^\nu) = \square h^\mu$
and consequently $A(\square h) = A_\mu(\square h^\mu) =  
F_{\mu \nu}(\partial^\nu h^\mu - \partial^\mu h^\nu) = F(d h)$.
Similarly, one obtains for $k \in \Cc_1(\RR^4)$ the
equality $B(\square k) = 
G(d k)$. 
Now given any Borel set $\Delta_0 \subset \RR$ which has a finite
distance from $0 \in \RR$, it follows from the preceding 
result, the covariance of the fields,  and
the invariance of $\Omega$  under translations that 
$$
M^2 E(\Delta_0) A(h) \Omega = 
- E(\Delta_0) A(\square h) \Omega = - E(\Delta_0) F(d h) \Omega \, .
$$  
In view of the choice of $\Delta_0$, this implies 
$E(\Delta_0) A(h) \Omega = - M^{-2} E(\Delta_0) F(d h) \Omega$ 
and, in the same manner, one obtains 
$E(\Delta_0) B(k) \Omega = - M^{-2} E(\Delta_0) G(d k) \Omega$. 
In view of the hermiticity of $F,G$, and hence of $A,B$,  we 
therefore arrive at 
\begin{align*}
& \langle \Omega, A(h) E(\Delta_0) B(k) \, \Omega \rangle 
 - \langle \Omega, B(k) E(\Delta_0) A(h) \, \Omega \rangle  \\
= & \ \langle \Omega, F(dh) M^{-4} E(\Delta_0) \, G(dk) \, \Omega \rangle 
 - \langle \Omega, G(dk) M^{-4} E(\Delta_0) \, F(dh) \, \Omega \rangle \, .
\end{align*}
Now if $h,k \in \Cc_1(\RR^4)$ have spacelike separated supports, 
the same is true for their curls 
$d h, d k \in \Dc_2(\RR^4)$. Since $F,G$ are local fields, it 
follows from the preceding equality and the above 
relation, based on the 
Jost-Lehmann-Dyson representation, that 
$$
\langle \Omega, A(h) E(\Delta_0) B(k) \, \Omega \rangle 
 - \langle \Omega, B(k) E(\Delta_0) A(h) \, \Omega \rangle = 0 
\ \ \mbox{if} \ \ \supp \, h \perp \supp \, k \, . 
$$
In view of the continuity 
properties of the spectral resolution \mbox{$\Delta_0 \mapsto E(\Delta_0)$}  
and the fact that the contributions of the intermediate 
vacuum state $\Omega$ cancel in the commutator function, 
we have established the following result.
\begin{proposition}
In any theory having a mass spectrum
with values $m > 0$ on the orthogonal 
complement of the vacuum $\Omega$,
one has $[A(h), B(k)] = 0$ for all test functions
$h,k \in \Cc_1(\RR^4)$ with \mbox{$\supp \, h \perp \supp \, k$}. The 
commutators can be different from $0$ only in theories,  
where the spectral projection~$E(\{ 0 \})$ of the mass operator 
acts non-trivially on the orthogonal complement of $\Omega$. 
\end{proposition}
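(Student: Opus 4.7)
The plan is as follows. Since $\fP$ acts irreducibly on the vacuum Hilbert space $\Hc$ and the causal commutator $[A(h),B(k)]$ lies in the center of $\fP$ by Proposition~\ref{prop.2.2}, it acts on $\Hc$ as a scalar multiple of the identity, and that scalar coincides with the vacuum expectation value $\langle \Omega,[A(h),B(k)]\Omega\rangle$. The task is thus reduced to showing that this scalar vanishes whenever $\supp h \perp \supp k$, under the hypothesis that the restriction of the mass operator $M$ to the orthogonal complement of $\Omega$ has spectrum contained in $[m_0,\infty)$ for some $m_0>0$.

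I would then insert the spectral resolution $E$ of $M$ between $A(h)$ and $B(k)$ and work first on Borel sets $\Delta_0 \subset \RR$ bounded away from the origin, so that $M^{-2}$ is bounded on $E(\Delta_0)\Hc$. For any $h\in\Cc_1(\RR^4)$ the co-closedness $\partial_\nu h^\nu = 0$ gives $\partial_\nu(\partial^\nu h^\mu - \partial^\mu h^\nu) = \square h^\mu$, whence $A(\square h) = F(dh)$; combining this with the translation invariance of $\Omega$, which turns $P^2$ on $A(h)\Omega$ into the multiplier coming from $-\square$ acting on the test function, yields the identity $E(\Delta_0)A(h)\Omega = -M^{-2} E(\Delta_0) F(dh)\Omega$, and likewise $E(\Delta_0)B(k)\Omega = -M^{-2}E(\Delta_0)G(dk)\Omega$.

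Substituting these identities into both orderings of $\langle\Omega,[A(h),B(k)]\Omega\rangle$ restricted to $\Delta_0$, the matter comes down to the equality of $\langle\Omega, F(dh)\, M^{-4} E(\Delta_0)\, G(dk)\Omega\rangle$ and $\langle\Omega, G(dk)\, M^{-4} E(\Delta_0)\, F(dh)\Omega\rangle$. At this point I would invoke the extended Jost--Lehmann--Dyson relation recalled just before the proposition: for spacelike-separated $f,g\in\Dc_2(\RR^4)$ and any bounded continuous function $c$ of $M$, the matrix elements $\langle\Omega,F(f)\,c(M)E(\Delta_0)\,G(g)\Omega\rangle$ and $\langle\Omega,G(g)\,c(M)E(\Delta_0)\,F(f)\Omega\rangle$ coincide. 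Since $dh,dk\in\Dc_2(\RR^4)$ inherit spacelike separation from $h,k$, and $c(M)=M^{-4}$ is bounded on $E(\Delta_0)\Hc$, the desired equality follows, so the contribution of $\Delta_0$ to the commutator vanishes.

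The closing step would be to exhaust the positive part of the mass spectrum by a monotone sequence $\Delta_0^{(n)}\nearrow(0,\infty)$ (or, more simply, by taking $\Delta_0 = [m_0,\infty)$ once the mass gap is used) and to appeal to strong continuity of the spectral measure, noting that the rank-one vacuum contribution at $m=0$ enters the two orderings symmetrically and hence cancels in the commutator. The main obstacle, which also dictates the formulation of the proposition, is precisely the need to invert $M^2$ on the intermediate spectral subspace: the Jost--Lehmann--Dyson cancellation is pushed through by the bounded operator $M^{-4}E(\Delta_0)$, and this is bounded only when $0$ is separated from the remaining spectrum on $\Omega^\perp$. Consequently the proof genuinely fails on $E(\{0\})\Hc$, and a nonzero causal commutator forces $E(\{0\})$ to act nontrivially on the orthogonal complement of $\Omega$.
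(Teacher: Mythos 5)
Your proposal follows essentially the same route as the paper: reduction of the central commutator to its vacuum expectation value, the identity $A(\square h)=F(dh)$ used to write $E(\Delta_0)A(h)\Omega=-M^{-2}E(\Delta_0)F(dh)\Omega$ on spectral sets bounded away from zero, the Jost--Lehmann--Dyson symmetry applied with the bounded function $M^{-4}E(\Delta_0)$ to the spacelike separated curls $dh,dk$, and finally exhaustion of the positive mass spectrum together with the cancellation of the vacuum contribution. The argument is correct and no substantive deviation from the paper's proof is present.
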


So we conclude that non-zero causal commutators, exhibiting the
linking number of the underlying loop functions, only appear
in the presence of massless particles; states with non-zero mass do not 
contribute to them.  

\section{Examples of causal commutators }
\label{sec5}

Having seen that the existence of non-trivial causal commutators 
is related to the existence of massless particles, 
we can easily exhibit now such examples. As was shown in the preceding
sections, it is sufficient to consider the vacuum expectation 
values of the commutators.

\medskip 
It follows from locality and Poincar\'e covariance 
of the fields  $F,G$ and 
the spectral properties of the translation 
operators that the (distributional)  
commutator functions on $\RR^4 \times \RR^4$ have a 
K\"all\'en-Lehman representation of the form 
$$
\langle \Omega, [F_{\mu \nu}(x), G_{\rho \sigma}(y) ] \Omega \rangle = \sum_Q
\!  \!
\int \! d \sigma_Q(m) \!  \! 
\int \! dp \, \varepsilon(p_0) \delta(p^2 - m^2) 
Q_{\mu \nu \rho \sigma}(p) e^{-ip(x-y)} \, ,
$$
where the sum extends over tensors $Q$ of rank four,   
built from polynomials in $p \in \RR^4$ and the metric 
tensor $g$, and $d \sigma_Q(m)$ are (unbounded)  
measures which have support on the mass
spectrum of the theory. 

\medskip 
The assumption that $F,G$ are hermitean, covariant and closed skew symmetric  
tensor fields of rank two imposes constraints on the tensors $Q$. 
We do not need to discuss here these constraints in full
generality since we know from the outset that contributions to the 
commutator functions with mass values $m>0$ 
do not affect the linking numbers. So we can focus on those
contributions to the above representation, which arise from an 
atomic value $m=0$ in the mass spectrum. They can be combined into
the expression
\begin{equation}
\label{e.5.1}
K^{(0)}_{\mu \nu \rho \sigma}(x-y) \doteq 
\int \! dp \, \varepsilon(p_0) \delta(p^2) 
\, Q^{(0)}_{\mu \nu \rho \sigma}(p) \, e^{-ip(x-y)} \, , 
\end{equation} 
where the tensor $ Q^{(0)}$ is also built from  
polynomials in $p \in \RR^4$ and the metric 
tensor $g$. It follows after a straight forward
computation that the above mentioned properties of the fields $F,G$ 
imply that $Q^{(0)}$ must have the form
\begin{align}
\label{e.5.2}
Q^{(0)}_{\mu \nu \rho \sigma}(p) & =
c_1 \, (p_\mu p_\rho g_{\nu \sigma}  - p_\nu p_\rho g_{\mu \sigma}
- p_\mu p_\sigma g_{\nu \rho} + p_\nu p_\sigma g_{\mu \rho}) \notag\\
& + c_2 \, (p_\mu p_\tau g_{\nu \upsilon}  - p_\nu p_\tau g_{\mu \upsilon}
- p_\mu p_\upsilon g_{\nu \tau} + p_\nu p_\upsilon g_{\mu \tau}) \ 
\varepsilon^{\tau \upsilon}_{\, \rho \sigma} \, , 
\end{align}
where $c_1, c_2 \in \RR$ and $\varepsilon$ is the 
totally skew symmetric Levi-Civita tensor. 

\medskip 
The tensor in 
the first line is familiar from expectation values of the 
electromagnetic field and complies with all
constraints, irrespective of the underlying mass spectrum.
Its contribution to $K^{(0)}$ in \eqref{e.5.1} is 
proportional  to the commutator of the free electromagnetic field
$F^{(0)}$ with itself. The tensor in the second line
requires some comment, however. It has the correct hermiticity properties,
is skew symmetric in $\mu, \nu$ as well as $\rho, \sigma$ and manifestly
encodes the fact that $F$ is closed. That this condition 
is also satisfied 
for $G$ follows from the fact that the tensor~$Q^{(0)}$ 
is restricted in $K^{(0)}$ to the mass shell $p^2 = 0$. 
The contribution to $K^{(0)}$, resulting from the 
second line in relation \eqref{e.5.2},  is therefore 
proportional to the commutator of the free 
electromagnetic field~$F^{(0)}$ with its Hodge-dual 
$\star F^{(0)}$. But this dual tensor field is also closed
since the electric current vanishes, \ie 
$d \star F^{(0)} = \star \delta  F^{(0)} = 0$.  
Thus any tensor $Q^{(0)}$ of the form given above 
gives rise to an admissible contribution $K^{(0)}$ 
to commutator functions.

\medskip
As has been shown by Roberts \cite{Roberts}, cf.\ also \cite{BuCiRuVa2}, 
the commutator of the free electromagnetic 
field $F^{(0)}$ and its Hodge dual $\star F^{(0)}$ 
gives rise to non-trivial causal commutators for 
the corresponding intrinsic vector potentials; in contrast,  
the causal commutators, determined by the 
commutator of the free electromagnetic
field with itself, vanish \cite{BuCiRuVa2}. In view of the 
preceding results, we have thus arrived at the following 
proposition, characterizing all fields $F,G$
leading to intrinsic vector potentials with non-trivial 
causal commutators. 

\begin{proposition}
\label{prop.5.1}
Let $F,G$ be local, hermitean, covariant and 
closed skew symmetric tensor fields. The causal commutators of the
corresponding intrinsic vector potentials $A,B$ are different from zero, 
indicating the linking numbers of the underlying loop functions, 
if and only if the (distributional) commutator function 
$ \langle \Omega, [F( \cdot ), G( \cdot ) ] \, \Omega \rangle $
contains in its K\"all\'en-Lehmann representation a contribution
$K^{(0)}(\cdot)$ as in \eqref{e.5.1}, where $Q^{(0)}$ is of 
the form~\eqref{e.5.2} with $c_2 \neq 0$. 
\end{proposition}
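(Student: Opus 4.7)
The plan is to combine the structural result of Section~4 with the two explicit model computations already available in \cite{Roberts,BuCiRuVa2}. By the Proposition in Section~4, only the atomic massless contribution $K^{(0)}$ to the K\"all\'en--Lehmann representation of the commutator function $\langle \Omega, [F(\cdot),G(\cdot)]\Omega \rangle$ can possibly affect the causal commutator $[A(h),B(k)]$ when $\supp(h)\perp\supp(k)$. Since by Proposition~\ref{prop.2.2} this commutator is central and $\fP$ is irreducibly represented on $\Hc$, it is a multiple of the identity and is fully determined by its vacuum expectation value. Hence the problem reduces to computing $\langle\Omega,[A(h),B(k)]\Omega\rangle$ assuming that the commutator function of $F,G$ is given by $K^{(0)}$ with $Q^{(0)}$ as in \eqref{e.5.2}.

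Next, I would exploit the defining relations $A(h)=F(f)$ with $\delta f=h$ and $B(k)=G(g)$ with $\delta g=k$. The passage from the distributional commutator of $F,G$ to that of $A,B$ is linear in the K\"all\'en--Lehmann tensor, and $Q^{(0)}$ splits by \eqref{e.5.2} into two independent pieces: the $c_1$-piece, which is exactly the tensor appearing in the commutator function of the free electromagnetic field $F^{(0)}$ with itself, and the $c_2$-piece, which is exactly the tensor appearing in the commutator function of $F^{(0)}$ with its Hodge dual $\star F^{(0)}$. Consequently, writing $\langle\Omega,[A(h),B(k)]\Omega\rangle = c_1 C_1(h,k) + c_2 C_2(h,k)$, the coefficient $C_1(h,k)$ is the causal commutator associated with $(F^{(0)},F^{(0)})$ and $C_2(h,k)$ that associated with $(F^{(0)},\star F^{(0)})$.

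The conclusion then follows by invoking the two model computations: by \cite{BuCiRuVa2} the causal commutators of the intrinsic vector potential of $F^{(0)}$ with itself vanish, so $C_1(h,k)=0$, while by \cite{Roberts,BuCiRuVa2} the causal commutators of the intrinsic vector potentials of $F^{(0)}$ and $\star F^{(0)}$ are nonzero; by Proposition~\ref{prop.3.3} applied to this pair, $C_2(l_{s_1,\gamma_1},l_{s_2,\gamma_2})=i\, L_c(\gamma_1,\gamma_2)\, z_0$ with $z_0\ne 0$. Combining these two facts with Proposition~\ref{prop.3.3} applied to $(A,B)$, one obtains $Z_{A,B}=c_2\, z_0\cdot \mathbf{1}$, which is nonzero exactly when $c_2\neq 0$. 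The main obstacle I expect is the careful kinematic bookkeeping in the second step, namely verifying that the prescription $A(h)=F(f)$ depends on the commutator function only through the K\"all\'en--Lehmann tensor $Q^{(0)}$ restricted to the mass shell $p^2=0$, so that the decomposition of $Q^{(0)}$ really induces a decomposition of $C_1, C_2$ matching the electromagnetic model; once this identification is justified by the Section~4 argument, the remaining steps are straightforward.
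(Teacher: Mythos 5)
Your proposal follows essentially the same route as the paper: reduce to the atomic massless contribution via the Section~4 result, split $Q^{(0)}$ as in \eqref{e.5.2} into the $c_1$-piece (commutator of $F^{(0)}$ with itself, whose causal commutators vanish by \cite{BuCiRuVa2}) and the $c_2$-piece (commutator of $F^{(0)}$ with $\star F^{(0)}$, nontrivial by \cite{Roberts,BuCiRuVa2}), and conclude that the causal commutator is a nonzero multiple of $c_2$. This matches the paper's argument, with your explicit use of Proposition~\ref{prop.3.3} to write $Z_{A,B}=c_2\,z_0\cdot\mathbf{1}$ being a harmless formalization of what the paper states in prose.
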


\medskip
We conclude this section by noting that there exists an 
abundance of quantum field theory models of fields $F,G$ with 
properties described in the preceding proposition. Examples 
can be easily exhibited in the class of generalized free field
theories. There the fields have c-number commutation
relations, so a generalized free field theory is completely fixed by 
specifying the two-point functions of the 
fields in the vacuum state. Thus one may put, for example, 
$$
\langle \Omega, F(f) G(g) \Omega \rangle 
\doteq c \, \langle \Omega, F^{(0)}(f) \star \! F^{(0)}(g) \Omega \rangle 
\, ,  \quad f,g \in \Dc_2(\RR^4) \, ,
$$
where $c \in \RR \backslash \{ 0 \}$. One then chooses with the help of
the K\"all\'en-Lehmann representation arbitrary two point functions for the 
field $F$, respectively $G$, satisfying
\begin{align*}
\langle \Omega, F(\overline f) F(f) \Omega \rangle 
& \geq |c| \, \langle \Omega, F^{(0)}(\overline f) F^{(0)}(f) \Omega \rangle 
\geq 0 \, , \quad  f \in \Dc_2(\RR^4) \, , \\
\langle \Omega, G(\overline g) G(g) \Omega \rangle 
& \geq |c| \, \langle \Omega, F^{(0)}(\overline g) \, F^{(0)}(g) \Omega \rangle 
\geq 0 \, , \, \quad g \in \Dc_2(\RR^4) \, .
\end{align*}
Since 
$\langle \Omega, \star F^{(0)}(\overline g) \, \star \! F^{(0)}(g) \Omega \rangle
= \langle \Omega, F^{(0)}(\overline g) \, F^{(0)}(g) \Omega \rangle$ 
for $g \in \Dc_2(\RR^4)$, it implies  
$$
|\langle \Omega, F(\overline f) G(g) \Omega \rangle|^2 
\leq \langle \Omega, F(\overline f) F(f) \Omega \rangle 
\langle \Omega, G(\overline g) G(g) \Omega \rangle \, ,
\quad f,g \in \Dc(\RR^4) \, , 
$$
in accordance with the condition of Wightman positivity \cite{StWi}. 
The resulting generalized free field theories comply with 
all constraints on the tensor fields $F,G$, 
and the causal commutators of the corresponding intrinsic 
vector potentials are different from zero.

\section{Conclusions}
\label{sec6}

In the present investigation we have studied the appearance of
linking numbers in quantum field theory, which arise from closed 
skew symmetric tensor fields $F,G$. These linking numbers appear 
as ``superselected charges'' in commutators
of the corresponding intrinsic vector potentials $A,B$, which are
smeared with loop functions, having supports in spacelike separated,
spatial loops. They are different from zero only in the presence
of massless particles in the theory. The intrinsic vector potentials 
$A,B$, which are defined on the space of vector valued test 
functions~$\Cc_1(\RR^4)$ with vanishing divergence, 
can then not both be extended to 
local pointlike vector fields, defined on the space of all vector valued
test functions $\Dc_1(\RR^4)$; this is true even if one admits potentials in 
indefinite metric spaces. So such fields provide genuine examples of 
closed, but \textit{not} exact tensor fields in the framework of local 
quantum field theory. 

\medskip
One can overcome this cohomological obstruction by relaxing the 
condition of pointlike locality for the vector potentials, admitting 
fields which are ray-localized. Such an approach   
has been proposed by J.~Mund, B.~Schroer and others, cf.~\cite{MuOl} 
and references quoted there. There 
one proceeds from closed skew symmetric tensor fields $F$
to vector potentials, defined as operator-valued 
distributions on $\RR^4 \times dS^3$,
where $dS^3 = \{ e \in \RR^4 : e^2 = -1 \}$ is de Sitter space.
These potentials are given by  
$$
A_{e \, \mu}(x) \doteq \int_0^\infty \!  du \, e^\rho F_{\mu \rho}(x + ue) \, .
$$ 
They satisfy, as desired,  
\begin{align*}
\partial_\mu A_{e \, \nu}(x) - \partial_\nu A_{e \, \mu}(x)
& = \int_0^\infty \!  du \, e^\rho (\partial_\mu F_{\nu \rho}(x + ue) 
- \partial_\nu F_{\mu \rho}(x + ue))  \\
& = - 
\int_0^\infty \!  du \, e^\rho \partial_\rho 
F_{\mu \nu}(x + ue) = F_{\mu \nu}(x) \, ,
\end{align*}
where it is assumed that $F$ vanishes at spacelike infinity
in the states of interest. The locality properties of the underlying
tensor fields then imply that the potentials 
$A_e(x)$ commute with all local fields which 
are localized in the spacelike complement of the 
ray $x + \RR_+ e$. This approach yields gauge invariant 
vector potentials, which are defined on the test function space 
$\Dc_1(\RR^4) \times \Dc(dS^3)$. The price one has 
to pay is to give up the standard locality property of the
vector potentials, which is, however, unavoidable in the cases 
considered in the present article. It is an intriguing question 
whether these ray localized vector potentials can serve as a
substitute for the local vector potentials in gauge quantum field 
theory, as envisaged in \cite{Sch}. 

\medskip
Let us conclude with the remark that our  
analysis of linking numbers in commutator functions is based on 
purely topological arguments, avoiding computations
of the Gauss integral~\eqref{eq.gauss}. 
We therefore hope that by refining our arguments 
one can establish results on knot invariants and Jones polynomials,  
as obtained by Witten in topological quantum field theory \cite{Witten},  
also in case of relativistic quantum field theories with non-abelian
gauge groups. Since the notion of homology, used in Lemma~\ref{lem.3.2}, 
is insensitive to knots, such an analysis has to be based on the 
finer concept of \emph{isotopy} \cite{BaMu,Spera}, however.

\section*{Acknowledgement}

\vspace*{-2mm}
DB  gratefully acknowledges the hospitality and support 
extended to him by Roberto Longo and 
the University of Rome ``Tor Vergata'', which made
this collaboration possible. 
FC and GR are supported by the ERC Advanced Grant 
669240 QUEST ``Quantum Algebraic Structures and Models". 
EV is supported in part by OPAL ``Consolidate the Foundations''.


\begin{thebibliography}{22}
{\small 

\bibitem{BaMu} Baez, J., Muniain, J.P., 
{\it Gauge Fields, Knots And Gravity},
Series on knots and everything Vol.4,
World Scientific, Singapore, New Jersey, London, Hong Kong, 1994

\bibitem{BuCiRuVa1} Buchholz, D., Ciolli, F., Ruzzi, G. and Vasselli, E., 
``The universal C*-algebra of the electromagnetic field'',
Lett.\ Math.\ Phys.\ {\bf 106} (2016) 269--285 \ 
Erratum: Lett.\ Math.\ Phys.\ {\bf 106} (2016) 287

\bibitem{BuCiRuVa2} Buchholz, D., Ciolli, F., Ruzzi, G. and Vasselli, E., 
``The universal C*-algebra of the electromagnetic field II.
Topological charges and spacelike linear fields'',
Lett.\ Math.\ Phys.\ {\bf 107} (2017) 201--222 \ 

\bibitem{DoHaRo} Doplicher,~S.\ Haag,~R.\ and Roberts,~J.E., 
``Local observables and particle statistics II'',
Commun.\ Math.\ Phys.\ {\bf 35} (1974) 49-85 

\bibitem{Hatcher} Hatcher, A.,  
{\it Algebraic Topology}, 
Cambridge University Press, Cambridge England, 2002 

\bibitem{MuOl} Mund,~J. and de Oliveira,~E.T., 
``String–localized free vector and tensor potentials for
massive particles with any spin: I. Bosons'', 
Comm.\ Math.\ Phys.\ {\bf 355} (2017) 1243–1282 

\bibitem{Roberts} Roberts, J.E., 
``A survey of local cohomology'', \ 
In: Mathematical problems in theoretical physics. (Rome, 1977), 81--93, 
Lecture Notes in Phys. 80, Springer, Berlin, New York, 1978

\bibitem{Rolfsen} 
Rolfsen, D. {\it Knots and Links},  
AMS Chelsea Publishing, Providence, RI, 1990 

\bibitem{Sch}
Schroer, B., ``Beyond gauge theory: positivity and causal localization 
in the presence of vector mesons'', 
Eur.\ Phys.\ J.\ C \textbf{76} (2016) 378

\bibitem{Spera} Spera, M.,
``A Survey on the Differential and Symplectic Geometry of Linking Numbers'', \ 
Milan J.\ Math.\ {\bf 74} (2006) 139--197

\bibitem{StWi}  Streater, R.F.\ and Wightman, A.S., 
{\it PCT, Spin and Statistics, and All That},
W.A. Benjamin, New York, Amsterdam, 1964

\bibitem{Sw} Swieca, J.A.,
``Charge  screening  and  mass  spectrum'',  
Phys.\  Rev.\ {\bf D13} (1976) 312-314

\bibitem{Witten} Witten, E.,
``Quantum field theory and the Jones polynomial'',
Comm.\ Math.\ Phys.\ {\bf 121} (1989) 351--399

}
\end{thebibliography}
\end{document}